\newtheorem{theorem}{Theorem}
\newtheorem{corollary}{Corollary}[theorem]
\begin{document}

\title{Strong Sleptsov Net is Turing-Complete}
\author{Dmitry~A.~Zaitsev\footnote{ORCID: 0000-0001-5698-7324, e-mail: daze\emph{@}acm.org; web-site: daze.ho.ua; University of Information Technology and Management in Rzeszów, ul. Sucharskiego 2, Rzeszow, 35-225, Poland.}}
\date{}
\maketitle

\begin{abstract}
It is known that a Sleptsov net, with multiple firing a transition at a step, runs exponentially faster than a Petri net opening prospects for its application as a graphical language of concurrent programming. We provide classification of place-transition nets based on firability rules considering general definitions and their strong and weak variants. We introduce and study a strong Sleptsov net, where a transition with the maximal firing multiplicity fires at a step, and prove that it is Turing-complete. We follow the proof pattern of Peterson applied to prove that an inhibitor Petri net is Turing-complete simulating a Shepherdson and Sturgis register machine. The central construct of our proof is a strong Sleptsov net that checks whether a register value (place marking) equals zero.

\textit{Keywords:} Sleptsov net; Multiple firing; Turing-completeness; Zero check; Register machine
\end{abstract}

\section{Introduction}
\label{}
A timed Petri net with multichannel transitions \cite{ZaitsevPhD,CiSA97} was a forerunner of a uniform speed-up technique, often called ``an exhaustive use of rule'', for many modern universal computation systems including spiking neuron systems \cite{Neary,ExUoR1,ExUoR2}, DNA computing \cite{DNAcomp,Winfree1,Winfree2,Winfree3}, and multiset rewriting systems \cite{Alhazov}. The technique application allowed researchers to obtain exponential speed-up of computations. The corresponding class of untimed place-transition nets was called a Sleptsov net \cite{SNRF}. It seems rather reasonable to classify place-transition nets with regard to their transition firing rules: a Petri net \cite{Petri62,Peterson81} -- one transition fires at a step; a Salwicki\footnote{Hans-Dieter Burkhard told me in personal communication that Andrzej Salwicki hinted him the idea before writing and publishing \cite{Burkhard}.} net \cite{Burkhard} -- the maximal set of firable transitions fires at a step; a Sleptsov\footnote{Anatoly Sleptsov hinted me the idea in 1988 when I entered PhD course \cite{ZaitsevPhD} under his supervision.} net \cite{SNRF} -- multiple firing of a transition at a step. Sleptsov nets are applied as a general-purpose graphical language of concurrent programming \cite{SNC}. Sometimes it is useful to combine Salwicki and Sleptsov nets to compose fast parallel processes. 

From the computation point of view, a Petri net is known as more powerful than a state machine (finite automaton) and less powerful than a Turing machine. Such amendments to a Petri net as an inhibitor arc \cite{Tilak,Hack74} or priority of transitions \cite{Hack76} make the model Turing-complete. That is why for computational purposes Sleptsov nets with inhibitor arcs and priorities have been applied \cite{SNRF,PSNL,USN}. From the other hand, Burkhard has proven that Salwicki net (without any amendments) is Turing-complete \cite{Burkhard}. It was also proven that infinite Petri net is Turing-complete \cite{UIPN}. As a collateral answer to Shannon contest \cite{Shannon} for a small universal Turing machine, a series of universal constructs \cite{Korec,Neary}, including universal Petri and Sleptsov nets \cite{UPN,TMUPN,SCAIPN,SUDPN, USN}, has been presented. 

In the present paper, we prove that a strong Sleptsov net, where a transition with the maximal firing multiplicity fires at a step, is Turing-complete. Tilak Agerwala \cite{Tilak} has proven Turing-completeness of an inhibitor Petri net via direct simulation of a Turing machine \cite{Turing}. We follow the later, more compact, proof pattern offered by James Peterson \cite{Peterson81} via simulation of Shepherdson and Sturgis \cite{Shepherdson} register machine, close to Minsky program machine \cite{Minsky}, known as Turing-complete. As a crucial construct of the proof, a strong Sleptsov net that checks whether a register (place) equals zero has been constructed, only nonnegative integer numbers are considered.

\section{Classification of Place-Transition Nets Based on Firing Rules}
%\label{}
Schemata of manufacturing processes has adopted parallel processes notation much more earlier than program schemata, a standard for process charts in manufacture was issued in 1947 \cite{ASMEflow}, preliminary research results published in 1921 \cite{Gilbreth}. The first flow-chart standard of von Neumann and Goldstein \cite{FlowChart} of 1949, as well as program schemata of Yanov \cite{Yanov1,Yanov2} of 1958, did not contain facilities for specification of parallel processes. 

In 1958, Gill \cite{Gill} presented schemata of parallel computing processes; his schemes represent a bipartite graph with two parts of vertices depicted as circles and rectangles (though a circle serves for splitting and joining processes). A place-transition net introduced by Carl Petri in 1962 \cite{Petri62} represents, in essence, a unification of process nets presented by Frank and Lilian Gilbreth in 1921 \cite{Gilbreth}, only two kinds (parts) of vertices remain -- places (depicted as circles) to represent conditions and transitions (depicted as rectangles) to represent events. Introduction of a dynamic element, called a token, by Carl Petri \cite{Petri62} enriched and empowered the model. Note that recently, UML \cite{UML} has adopted place-transition notation to represent parallel processes with activity diagrams.

\subsection{Place-transition Nets} 

We define a \textit{place-transition net} (PTN) as a bipartite directed multigraph with a dynamical process (behavior) specified on it: a PTN is a tuple $N=(P,T,A,\mu_0)$, $P=\{p\}$ is a finite set of places, $T=\{t\}$ is a finite set of transitions, mapping $A: P \times T \cup T \times P \rightarrow \mathbb{Z}_{\geq 0}$ specifies arcs and their multiplicity, mapping $\mu: P \rightarrow \mathbb{Z}_{\geq 0}$ represents marking of places, $\mu_0$ denotes the initial marking. 

The net \textit{behavior} is a discrete process of changing its marking as a result of firing transitions. Discrete time is considered as a sequence of enumerated steps (tacts) $\tau=1,2,\dots$. During a tact the net marking stays unchanged and changes in the moment when the current tact changes. We denote as $\mu^\tau$ the net marking at step $\tau$. In the present paper, we study systematically and classify transition firing rules. We are interested in modifications which make the system Turing-complete.

We introduce a \textit{firability multiplicity of an arc} $(p,t)$ directed from place $p$ to transition $t$ (that implies $a(p,t)>0$) as follows: 
$$c(p,t)=\mu(p) / a(p,t),$$ 
where the division operation is considered as a whole division. Further we introduce a \textit{firability multiplicity of a transition} as follows: 
$$c(t)=min_{A(p,t)>0}(c(p,t)).$$ 
Finally, we define a \textit{firable transition} as a transition having 
$$c(t)>0.$$

Firstly, we concentrate on PTNs which do not require structural amendments using the same graph. We define: a \textit{Petri net} as a PTN where a single firable transition fires at a step; a \textit{Salwicki net} as a PTN where a maximal subset of firable transitions fires at a step; and a \textit{Sleptsov net} as a PTN where a firable transition with the maximal firability multiplicity fires in the maximal number of copies at a step. Note that, all three cases imply choice of firable transition or a subset of firable transitions which is traditionally implemented in a nondeterministic way, in case, there are no additional restrictions. Further, we apply a multiset \cite{multiset1,multiset2} to specify firable and firing transitions. For Petri and Sleptsov nets, we use a multiset containing one transition: for a Petri net, a transition belongs to the multiset with multiplicity 1, while for a Sleptsov net, it belongs with multiplicity $c(t)$. 

When a chosen multiset $F$ of transitions \textit{fires}, it decreases the places marking according to its transitions' incoming arcs and their multiplicity and it increases the places marking according to its transitions' outgoing arcs and their multiplicity in the following way for the three defined above classes of PTNs. 
For a Petri net: 
$$\mu^{\tau+1}(p)=\mu^{\tau}(p) - a(p,t) + a(t,p),~t \in F,~p \in P.$$ 
For a Salwicki net: 
$$\mu^{\tau+1}(p)=\mu^{\tau}(p) - \sum_{t \in F} a(p,t) + \sum_{t \in F} a(t,p),~p \in P.$$
For a Sleptsov net: 
$$\mu^{\tau+1}(p)=\mu^{\tau}(p) - c(t){\cdot}a(p,t) + c(t){\cdot}a(t,p),~t \in F,~p \in P.$$
Note that, the maximality of the multiset $F$ for Salvicki net is considered with regard to validity of the next marking $\mu^{\tau+1}$ which should not contain negative numbers i.e. adding a firable transition to a maximal $F$ leads to an invalid marking.

To unite advantages of a Salwicki net, that fires a few transitions, with advantages of a Sleptsov net, that fires a transition in a few copies at a step, we introduce a \textit{Salwicki-Sleptsov} net as a PTN where a maximal multiset of firable transitions fires at a step. Note that the maximality of a multiset is defined as it is not a sub-multiset of any other firable multiset of transitions which leads to a valid next marking. For a Salwicki-Sleptsov net: 
$$\mu^{\tau+1}(p)=\mu^{\tau}(p) - \sum_{t \in F} c(t,F){\cdot}a(p,t) + \sum_{t \in F} c(t,F){\cdot}a(t,p),~p \in P.$$.

Also we introduce a modification of a Sleptsov net where a transition with the maximal firing multiplicity fires at a step and call it a \textit{strong Sleptsov net}. The formula for changing the place marking of Sleptsov net remains valid, the modification concerns only the firable transition choice to create set $F$ at the current step:
$$F=\{t' | t' \in T, \forall t \in T: c(t') \geq c(t) \}.$$
Note that, the net still possesses nondeterministic behavior because there can be a few transitions with the same maximal firing multiplicity.

We illustrate difference of firing rules and their effect on the reachability graph (RG) \cite{Peterson81} of the corresponding PTN using a simple example taken from \cite{SEPNC} and extended for Salwicki and Sleptsov nets for summing two natural number without control flow. The net is shown in fig.~\ref{fig-net-ex-ptn}. Further for representing a marking, we use either vector notation $(2,3,0)$ or multiset notation $\{2 \cdot p_1,3 \cdot p_2\}$, the examples correspond to the initial marking of PTN shown in fig.~\ref{fig-net-ex-ptn}. Unit multiplicity multiplier before the place name is usually omitted. Remind that an RG \cite{Peterson81} represents a state space of PTN containing all valid markings depicted as vertices and transitions between them depicted as arcs labeled by firing transitions set $F$.

As it was proven, a Salwicki net is Turing-complete \cite{Burkhard} and a Sleptsov net runs exponentially faster than a Petri net \cite{SNRF}. The later fact opens prospects for Sleptsov net application as a uniform general purpose graphical language of concurrent programming \cite{PSNL}.

\subsection{Inhibitor and Priority Nets}

Historically, Turing-completeness has been proven at first for inhibitor \cite{Tilak,Hack74} and priority nets \cite{Hack76,Kotov84} introduced as an extension to a Petri net. An \textit{inhibitor arc}, depicted with a small circle at its end, is firable when the place marking is equal to zero. \textit{A priority relation} can be expressed as a set of arcs connecting transitions $Y \subset T \times T$. In case two transitions $t$ and $t'$ are firable and $(t,t') \in Y$, only $t$ can fire. An alternative way to introduce priorities is assigning integer numbers (values of priority) to transitions, a firable transition with the maximal priority value fires. Inhibitor and priority nets for computing sum of two numbers by adding sequentially the first and the second number are shown in fig.~\ref{fig-net-ex-ipn} and \ref{fig-net-ex-ppn}, respectively, their RG (the same for both nets) represented in fig.~\ref{fig-net-ex-rg-ippn}. It is an interesting fact that RG of strong Sleptsov net shown in fig.~\ref{fig-net-ex-rg-snmf2} looks like a condensation of RG for an inhibitor or priority net show in fig.~\ref{fig-net-ex-rg-ippn} where a sequence of firing the same transiton is replaced by a single arc with the maximal firing multiplicity. 

Inhibitor and priority arcs represent a convenient facility for practical graphical programming in Sleptsov (Petri) nets, they have been employed for composing a control flow in the form of moving zero marking (hole) and composing nets implementing basic arithmetic and logic operations for a Sleptsov net \cite{SNRF,PSNL} as well as composing a small universal Sletpsov net \cite{USN}. For convenience of definition, the firability multiplicity of an inhibitor arc in Sleptsov net is considered equal to infinity to not restrict the total firing multiplicity of a transition \cite{SNRF}.

Note that, the small example shown in fig.~\ref{fig-net-ex} does not allow us to illustrate all the peculiarities of the considered firing rules mainly because it is not enough rich to represent alternative maximal sets of firable transitions for a Salwicki net, transitions with the same firable multiplicity for a Sleptsov net, and alternative maximal firable multisets for a Salwicki-Sleptsov net. Because of the mentioned possibilities, nondeterministic choice is present in the three mentioned classes of PTNs. 

\begin{figure}
\begin{center}
\begin{subfigure}[h]{0.32\textwidth}\centering{\scalebox{0.32}{\includegraphics{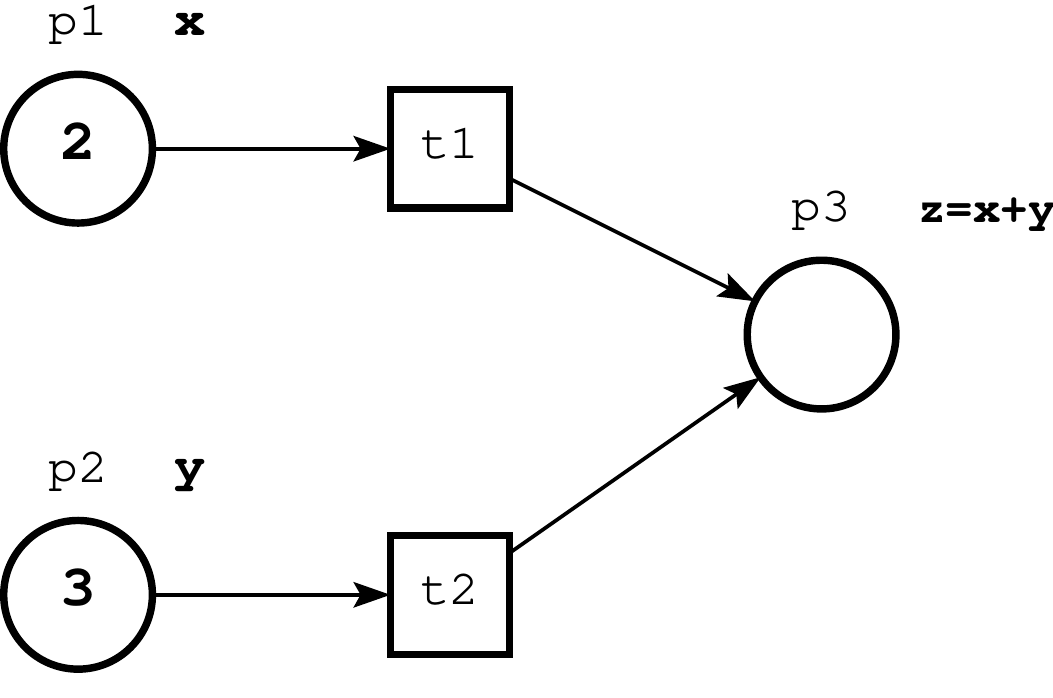}}} %\hspace{0.7cm}
\caption{place-transition net (Petri, Salwicki, Sleptsov);}
\label{fig-net-ex-ptn}
\end{subfigure}
\begin{subfigure}[h]{0.32\textwidth}\centering{\scalebox{0.32}{\includegraphics{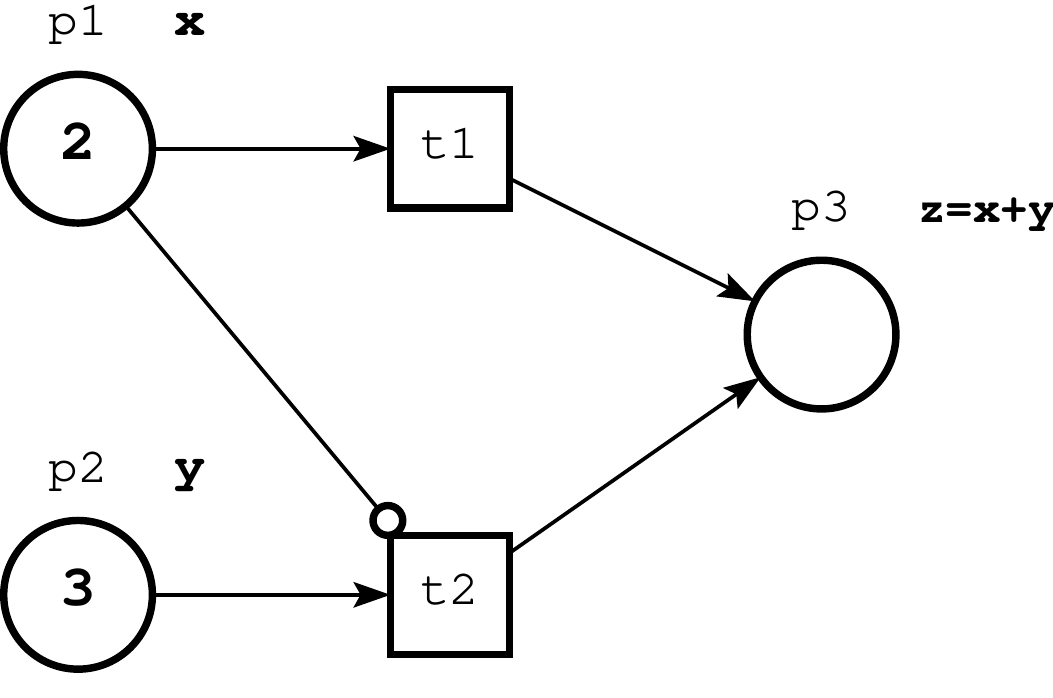}}} %\hspace{0.7cm}
\caption{inhibitor net;}
\label{fig-net-ex-ipn}
\end{subfigure}
\begin{subfigure}[h]{0.32\textwidth}\centering{\scalebox{0.32}{\includegraphics{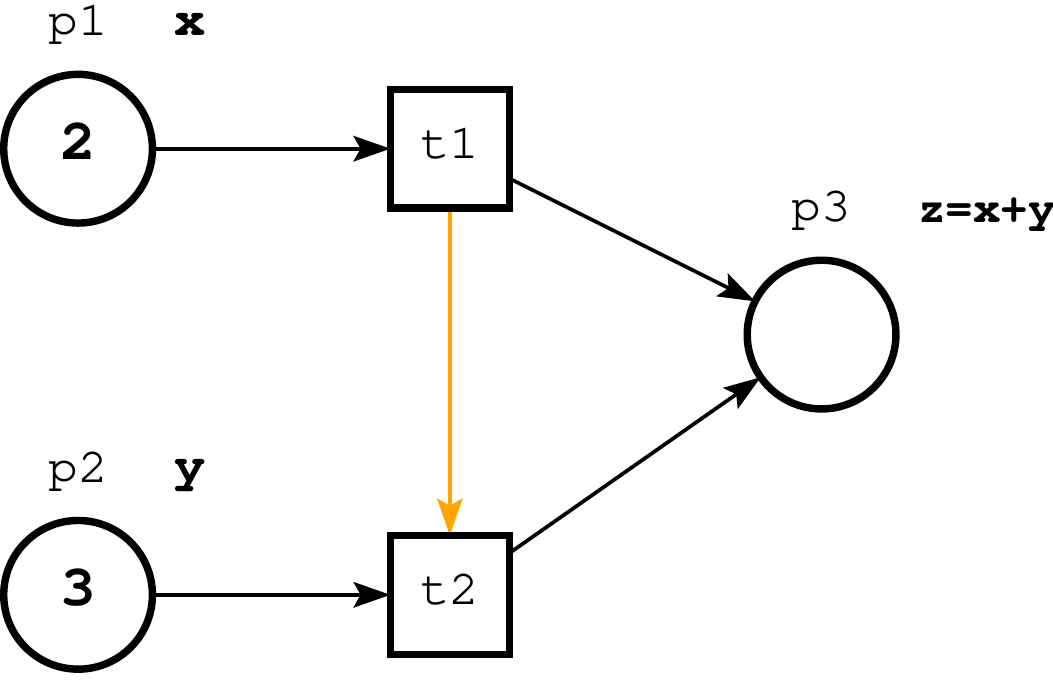}}} %\hspace{0.7cm}
\caption{priority net.}
\label{fig-net-ex-ppn}
\end{subfigure}
\caption{Nets computing sum of two numbers. }
\label{fig-net-ex}
\end{center}
\end{figure}

\subsection{Weak and Strong Place-transition Nets}

In \cite{SNRF}, after definition of Sleptsov nets with inhibitor arcs, possibility of defining strong and weak variants of the firing rule, attractive from some practical points of view, have been mentioned. Here we explore this remark to present a regular classification of considered classes of PTNs. 

Having a \textit{general definition}, we consider as a \textit{strong (restricted) definition} its variant where a multiset with the maximal number of transitions is chosen and we consider as a \textit{weak (relaxed) definition} its variant where any valid sub-multiset of firable transitions is chosen. Let us mention, that the strong variant still allows alternatives. The described above classification is represented systematically in table~\ref{fire-class-table}.

\begin{table}[h!]
\small
\caption{Classification of PTNs.\label{fire-class-table}}
\begin{tabular}{|m{3cm}|m{3cm}|m{3cm}|m{3cm}|}
\toprule
Net type & Strong definition & General definition & Weak definition\\
\midrule
Petri &--&Any firable transition&--\\
\hline
Salwicki &The maximal number of firable transitions&A maximal set of firable transitions&A subset of firable transitions\\
\hline
Sleptsov &A transition with the maximal number of firable copies&A firable transition in the number of firable copies&A firable transition in any number of copies equal to or less than the number of firable copies\\
\hline
Salwicki-Sleptsov &The maximal total number of firable transitions copies&A maximal multiset of firable transitions&Any sub-multiset of firable transitions\\
\bottomrule
\end{tabular}\\[10pt]
\end{table}

So far as a Petri net fires a single transition, strong and weak variants coincide with the general one for it. 
For a Salwicki net, a subset containing the maximal number of transitions fires in the strong variant and any subset of firables transitions can fire in the weak variant. 
For a Sleptsov net, a firable transition with the maximal firability multiplicity fires in the strong variant and any firable transition fires in the number of copies less than or equal to the firability multiplicity.
For a Salwicki-Sleptsov net, a maximal sub-multiset containing the maximal number of transition copies fires in the strong variant and any sub-multiset of the firable transitions multiset fires in the weak variant.  

\begin{figure}
\begin{center}
\begin{subfigure}[h]{0.32\textwidth}\centering{\scalebox{0.32}{\includegraphics{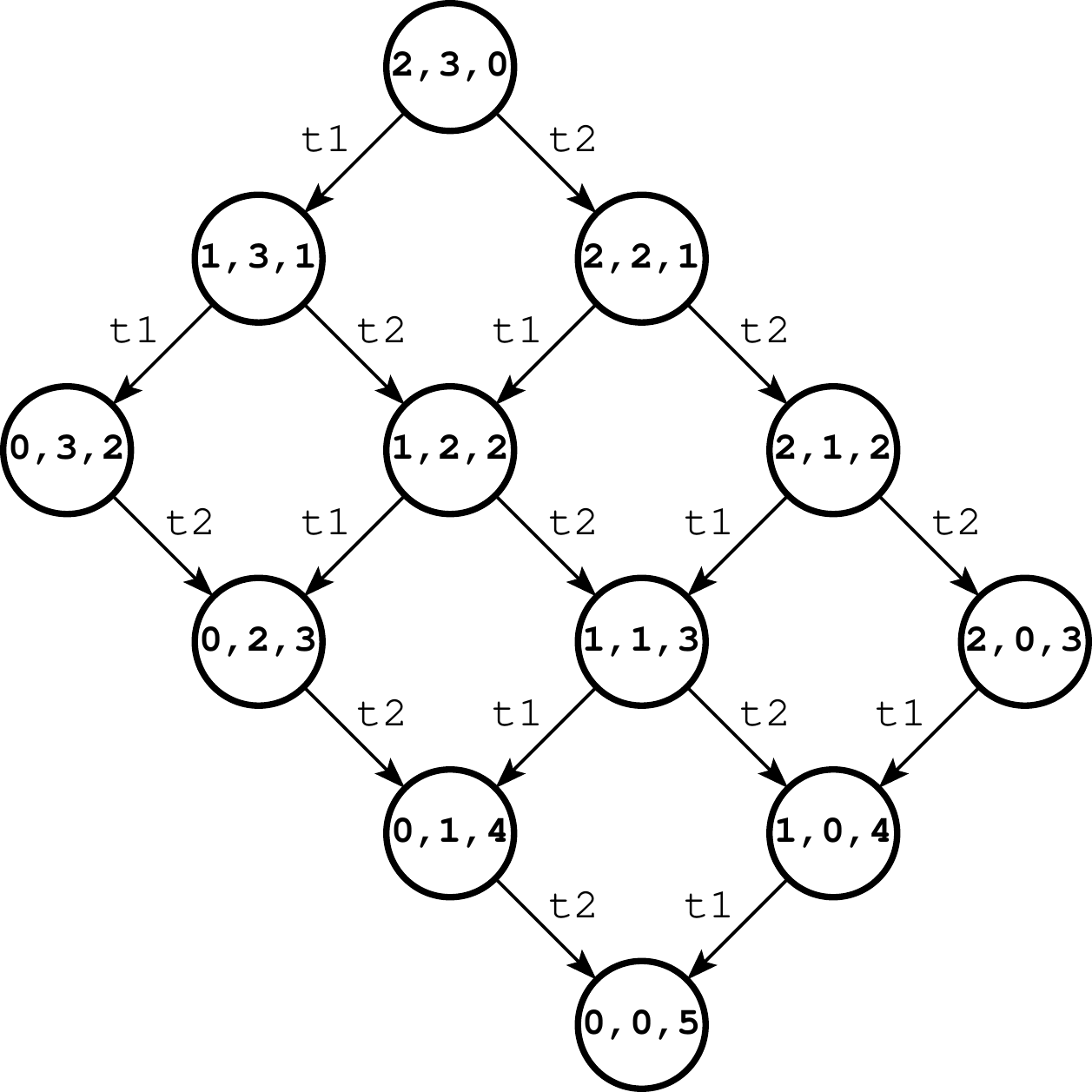}}} %\hspace{0.7cm}
\label{fig-net-ex-rg-pn}
\caption{Petri net;}
\end{subfigure}
\begin{subfigure}[h]{0.32\textwidth}\centering{\scalebox{0.32}{\includegraphics{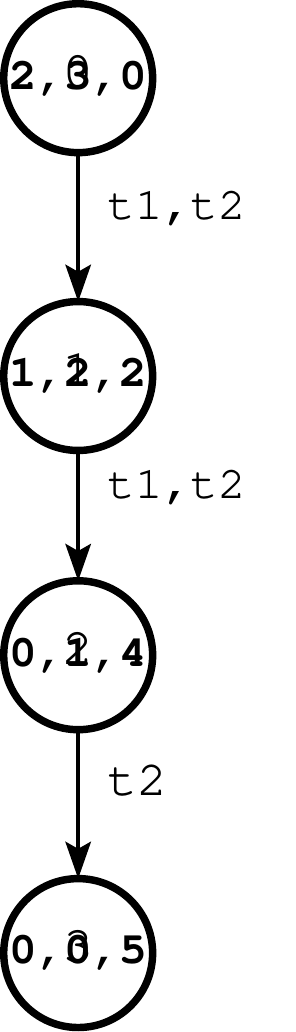}}} %\hspace{0.7cm}
\caption{Salwicki net.}
\label{fig-net-ex-rg-wn}
\end{subfigure}
\begin{subfigure}[h]{0.32\textwidth}\centering{\scalebox{0.32}{\includegraphics{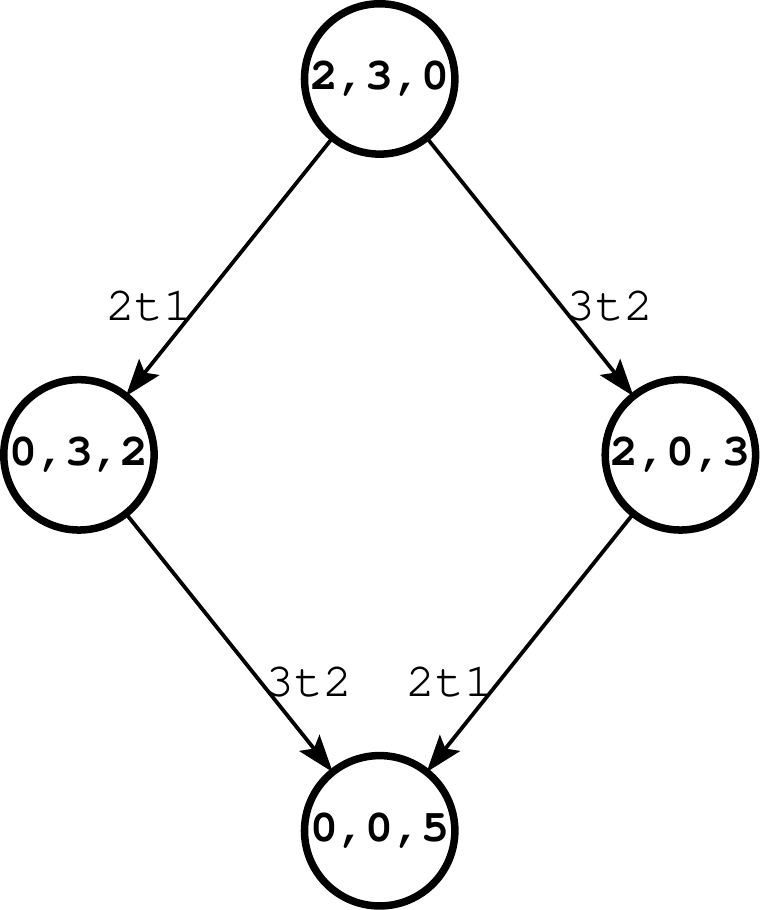}}} %\hspace{0.7cm}
\caption{Sleptsov net.}
\label{fig-net-ex-rg-sn}
\end{subfigure}
\begin{subfigure}[h]{0.32\textwidth}\centering{\scalebox{0.32}{\includegraphics{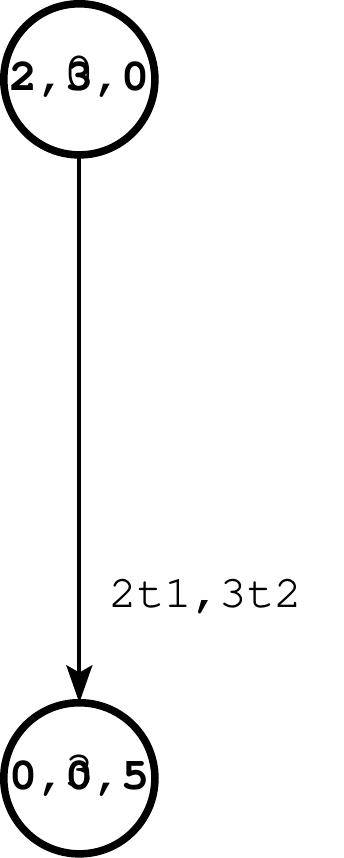}}} %\hspace{0.7cm}
\caption{Salwicki-Sleptsov net;}
\label{fig-net-ex-rg-wsn}
\end{subfigure}
\begin{subfigure}[h]{0.32\textwidth}\centering{\scalebox{0.32}{\includegraphics{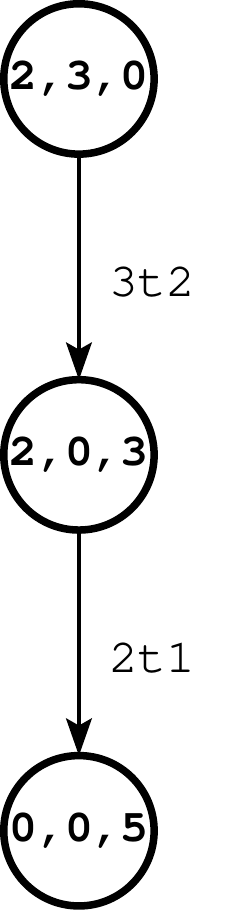}}} %\hspace{0.7cm}
\caption{Strong Sleptsov net.}
\label{fig-net-ex-rg-snmf2}
\end{subfigure}
\begin{subfigure}[h]{0.32\textwidth}\centering{\scalebox{0.32}{\includegraphics{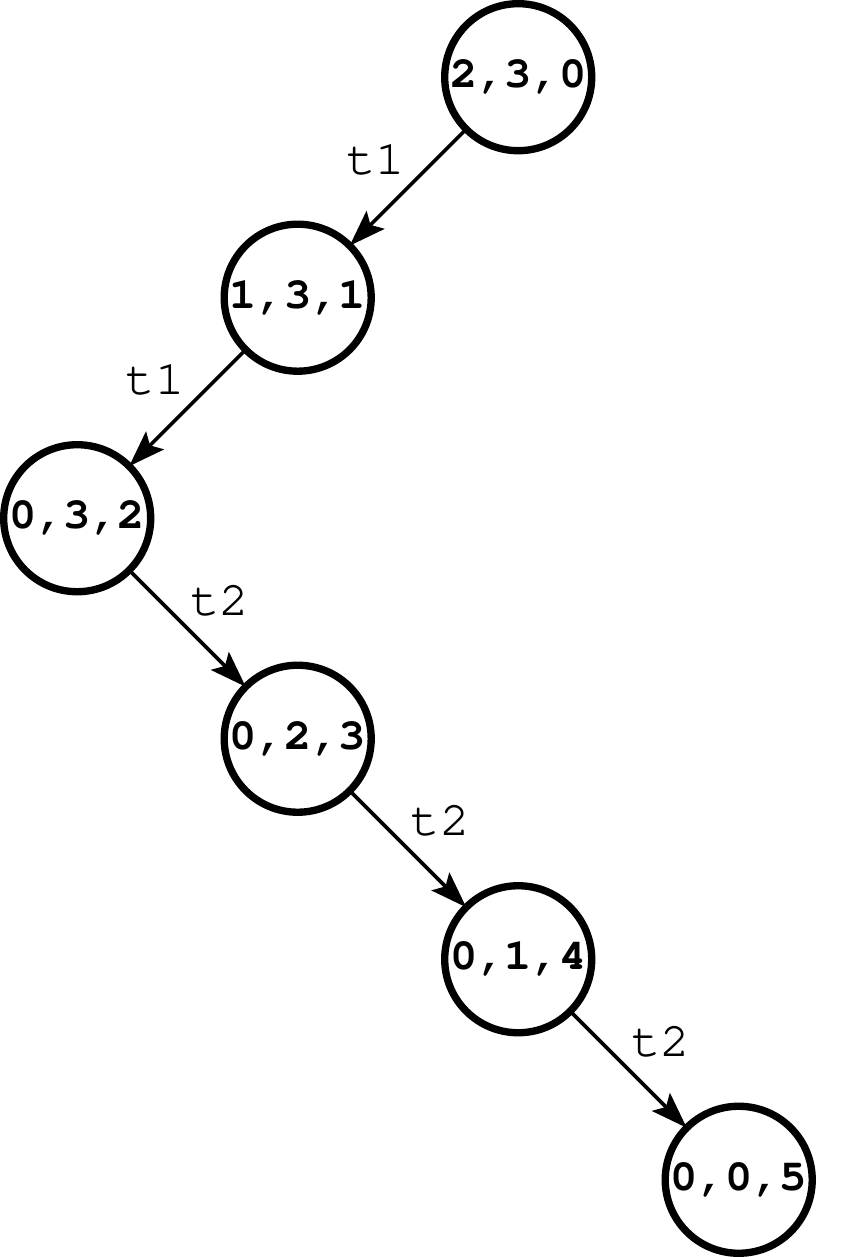}}} %\hspace{0.7cm}
\caption{inhibitor or priority net;}
\label{fig-net-ex-rg-ippn}
\end{subfigure}
\caption{Reachability graphs of example nets in fig.~\ref{fig-net-ex}. }
\label{fig-net-ex-rg}
\end{center}
\end{figure}

Reachability graphs of a weak Salwicki net and a weak Sleptsov net for addition of two numbers (fig.~\ref{fig-net-ex-ptn}) are represented in fig.~\ref{fig-net-ex-rg-wn} and fig.~\ref{fig-net-ex-rg-sn}, respectively. For the corresponding weak Salwicki-Sleptsov net, the RG is too tangled containing an arc connecting any node with any ``next'' node, considering as the ``next'' a node having greater number of the sum (place $p_3$ marking). Weak nets with rather big freedom of behavior could be useful for solving optimization tasks where criteria are not so straightforward as the minimal number of steps. 

\begin{figure}
\begin{center}
\begin{subfigure}[h]{0.32\textwidth}\centering{\scalebox{0.32}{\includegraphics{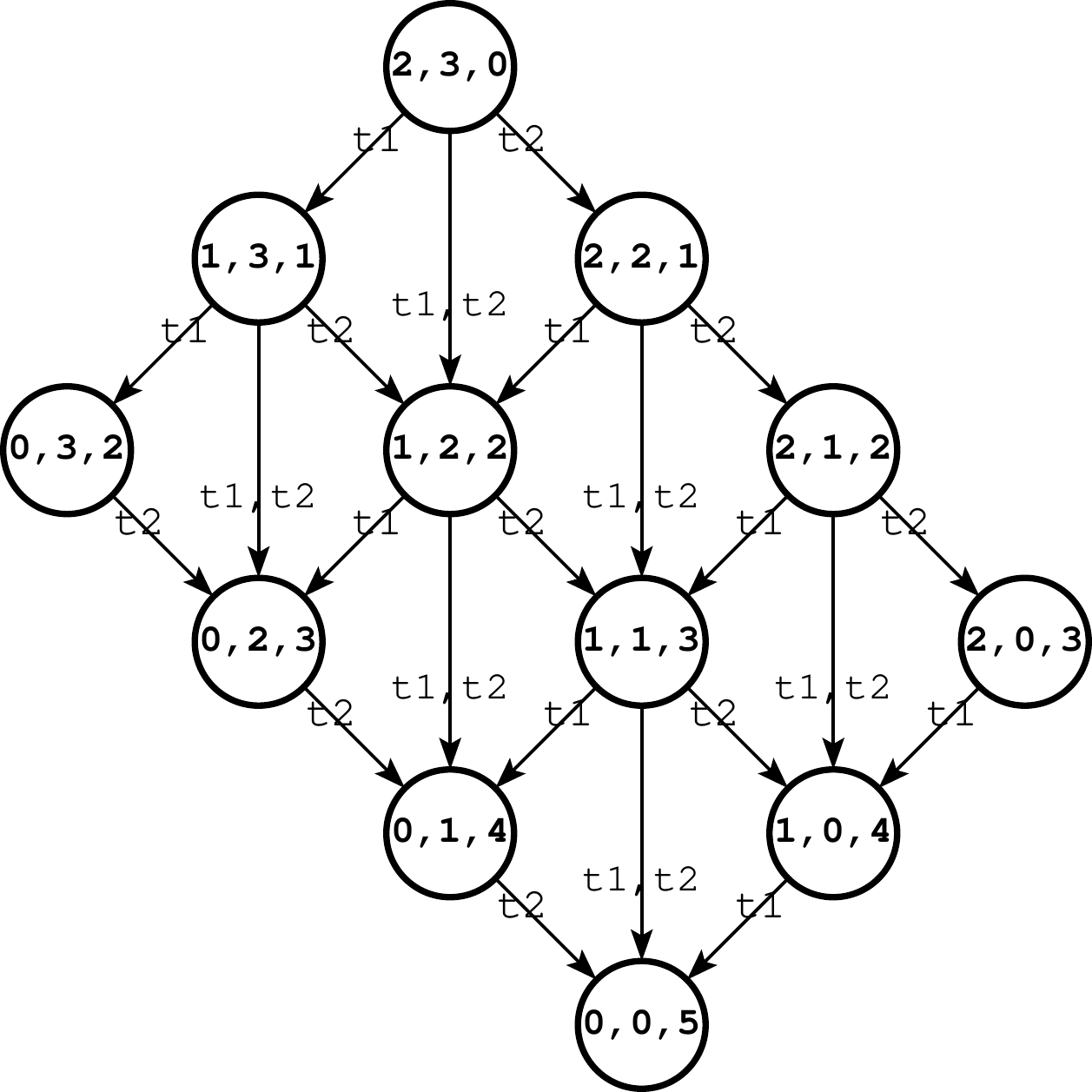}}} %\hspace{0.7cm}
\caption{weak Salwicki net;}
\label{fig-net-ex-rg-wn}
\end{subfigure}
\begin{subfigure}[h]{0.32\textwidth}\centering{\scalebox{0.32}{\includegraphics{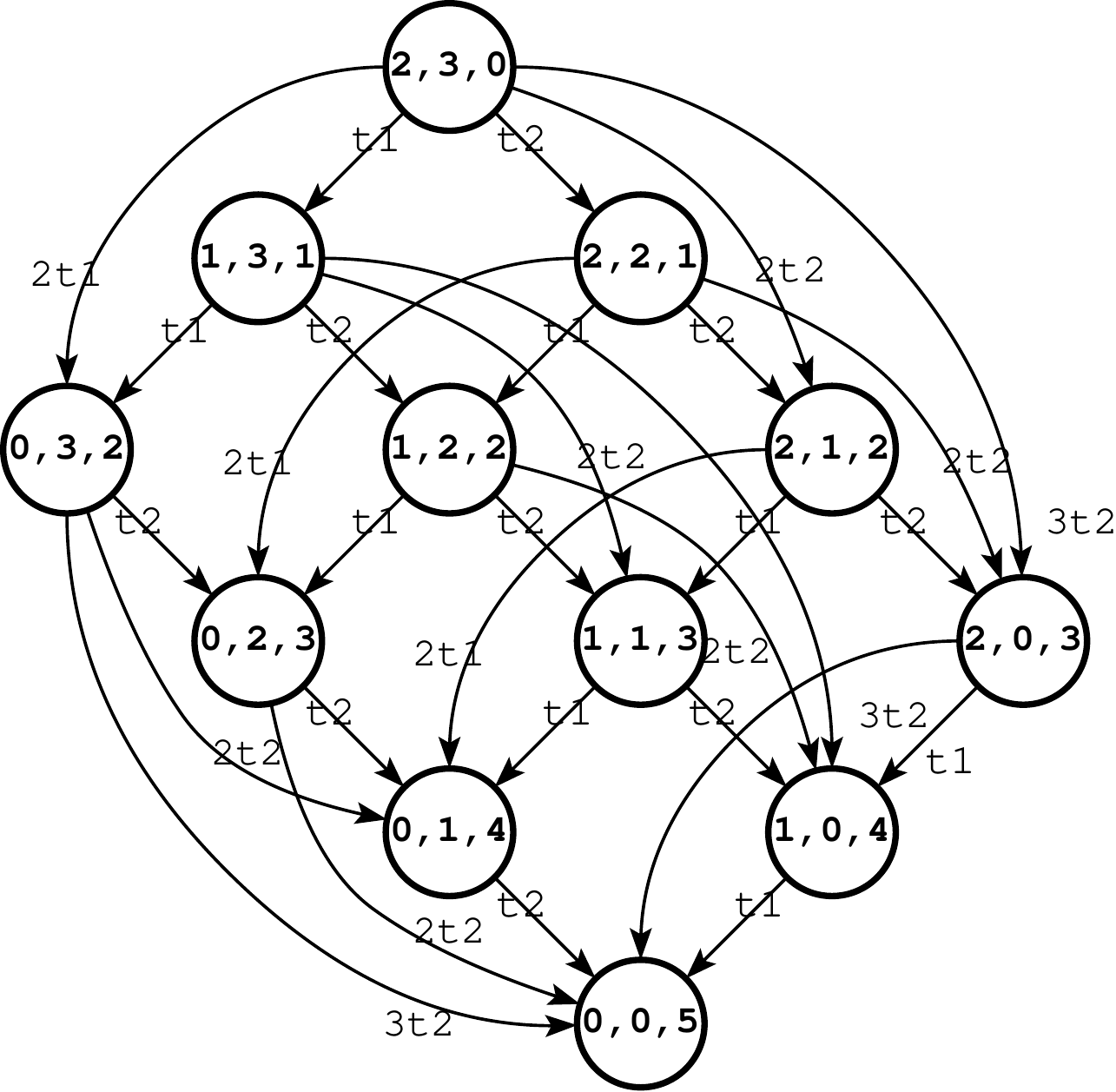}}} %\hspace{0.7cm}
\caption{weak Sleptsov net.}
\label{fig-net-ex-rg-sn}
\end{subfigure}
\caption{Reachability graphs of weak Salwicki and Sleptsov nets.}
\label{fig-net-ex-rg}
\end{center}
\end{figure}

\section{Simulating Register Machine by Strong Sleptsov Net}
\label{sim-rm-by-ssn}

We consider a register machine (RM) as described in \cite{Shepherdson} where it has been proven that an RM is Turing-complete. Note that, an RM is rather close in its definition to a program machine of Minsky \cite{Minsky} and to a counter machine \cite{Fischer}. Here we simulate, using a technique similar to \cite{Peterson81}, an RM by a strong Sleptsov net to prove the fact that a strong Sleptsov net is Turing-complete. 

\subsection{Composition of Strong Sleptsov Net on a Given Register Machine}

A register machine represents a storage containing $n$ registers to store a nonnegative integer number in each of them and a control unit that executes a given program. We show the way how an algorithm, encoded by a register machine program, can be represented by a strong Sleptsov net. A program of an RM is a sequence of $m$ instructions over finite set of registers (variables) having nonnegative integer values. There are three following valid types of  instructions:
\begin{itemize}
\item $P(i)$: increase register $i$ ny 1.
\item $Q(i)$: decrease register $i$ by 1 (register $i$ is not zero).	
\item $J(i)[k]$: jump to instruction $k$ if resister $i$ is zero.
\end{itemize}

We represent $n$ registers, employed in a program, by $n$ places $r_1, \dots, r_n$. We simulate two first instruction of RM $P(i)$ and $Q(i)$ by PTNs shown in fig.~\ref{fig-net-inc} and fig.~\ref{fig-net-dec}, respectively. Place $p_1$ starts the instruction, place $p_2$ indicates its completion, and place $p_3$ represents a register. 

Let us consider known implementations of zero check instruction J(i) by inhibitor \cite{Peterson81}, priority \cite{Kotov84}, and Salwicki \cite{Burkhard} net shown in fig.~\ref{fig-net-zero-check}. Enumerating places, we consider $p_1$ as a start place, $p_2$ as a finish place, $p_3$ as a jump place, and $p_4$ as a register, other places and transitions are enumerated in an arbitrary order. 

Having separate models for a sequence of instructions of a given RM program, we merge (unite) finish place of the previous instruction with the start place of the next instruction to compose a model of RM program. 

\begin{figure}
\begin{center}
\begin{subfigure}[h]{0.32\textwidth}\centering{\scalebox{0.32}{\includegraphics{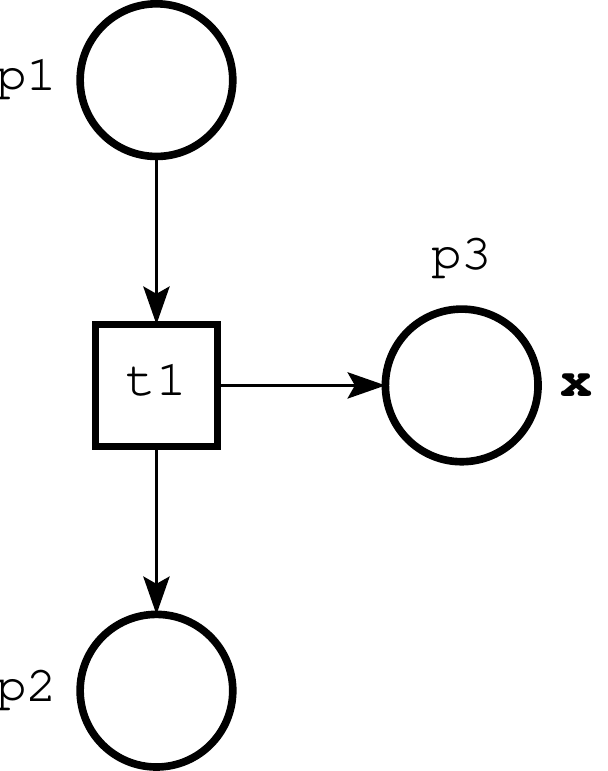}}} %\hspace{0.7cm}
\caption{increase register $x$ by 1;}
\label{fig-net-inc}
\end{subfigure}
\begin{subfigure}[h]{0.32\textwidth}\centering{\scalebox{0.32}{\includegraphics{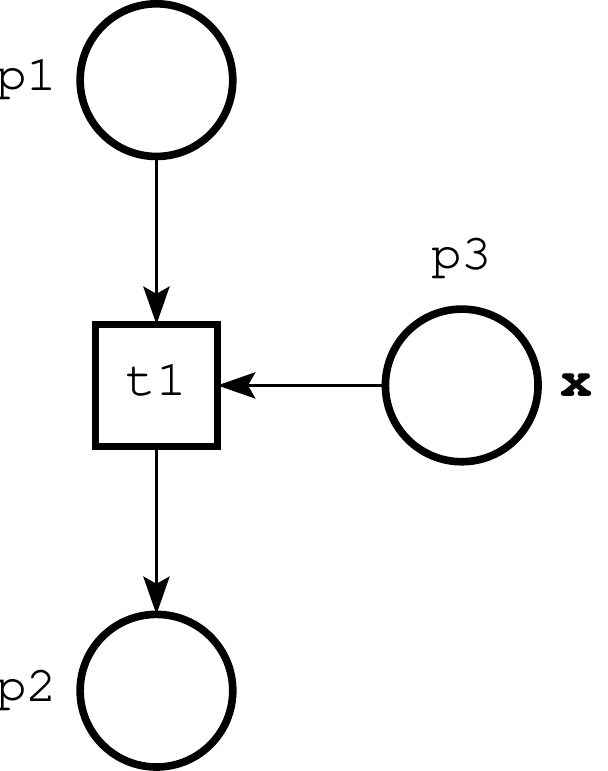}}} %\hspace{0.7cm}
\caption{decrease register $x$ by 1.}
\label{fig-net-dec}
\end{subfigure}
\caption{Simulating increment and decrement instructions of RM.}
\label{fig-net-inc-dec}
\end{center}
\end{figure}

For an RM program consisting of $m$ instructions, we represent the control flow of program by $m+1$ places renaming the instruction $j$ start place as $q_j$ and the last  instruction $m$ finish place as $q_{m+1}$. Thus, $q_1$ serves as the program start and $q_{m+1}$ indicates that the program has been completed. A statement $j$ is started by place $q_j$ and finished by place $q_{j+1}$, the branching instruction $J(i)$ is additionally connected to place $q_s$ for the second branch, merging its place $p_3$ with the corresponding $q_s$. We renumerate internal places of all the instructions in an arbitrary order starting from $q_{m+2}$, and renumerate all the transitions sequentially. Putting a token into place $q_1$ starts the program, its completion is observed by arrival of a token into place $q_{s+1}$. To comply with the PTN definition, we specify $P=Q \cup R$. 

\begin{figure}
\begin{center}
\begin{subfigure}[h]{0.32\textwidth}\centering{\scalebox{0.32}{\includegraphics{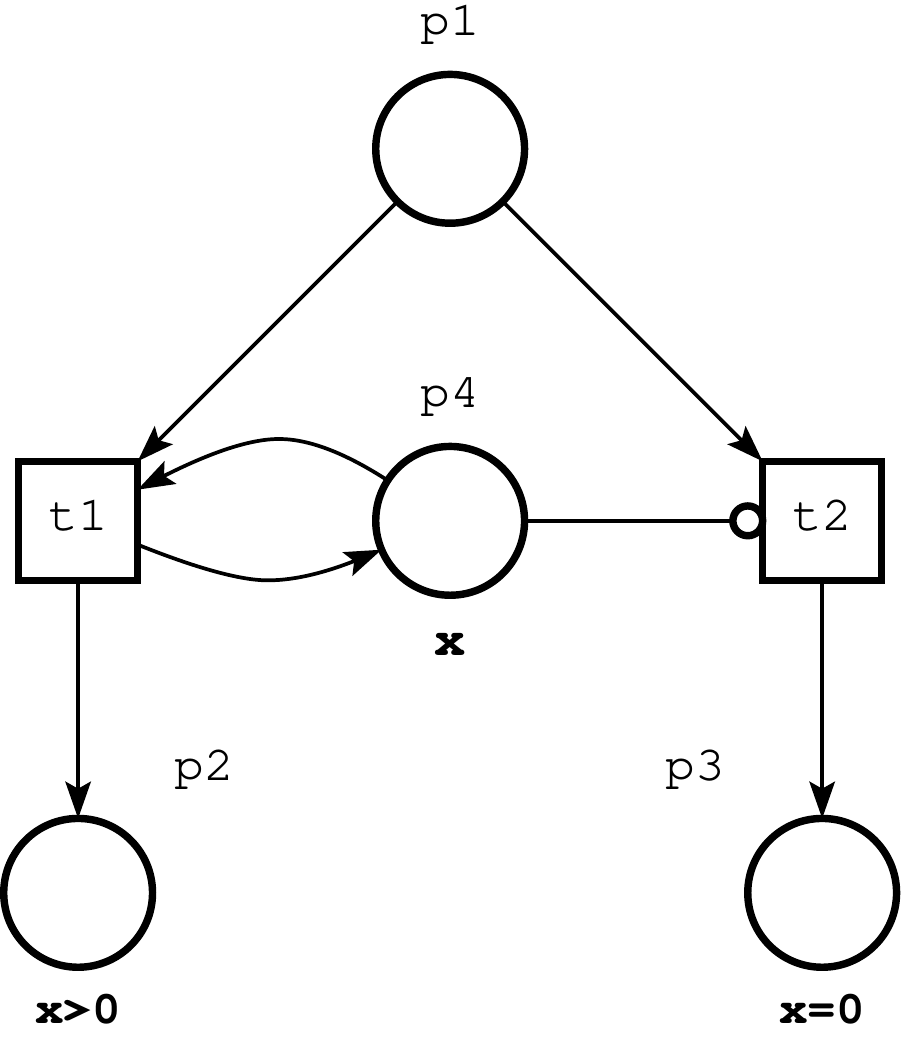}}} %\hspace{0.7cm}
\caption{inhibitor net;}
\end{subfigure}
\begin{subfigure}[h]{0.32\textwidth}\centering{\scalebox{0.32}{\includegraphics{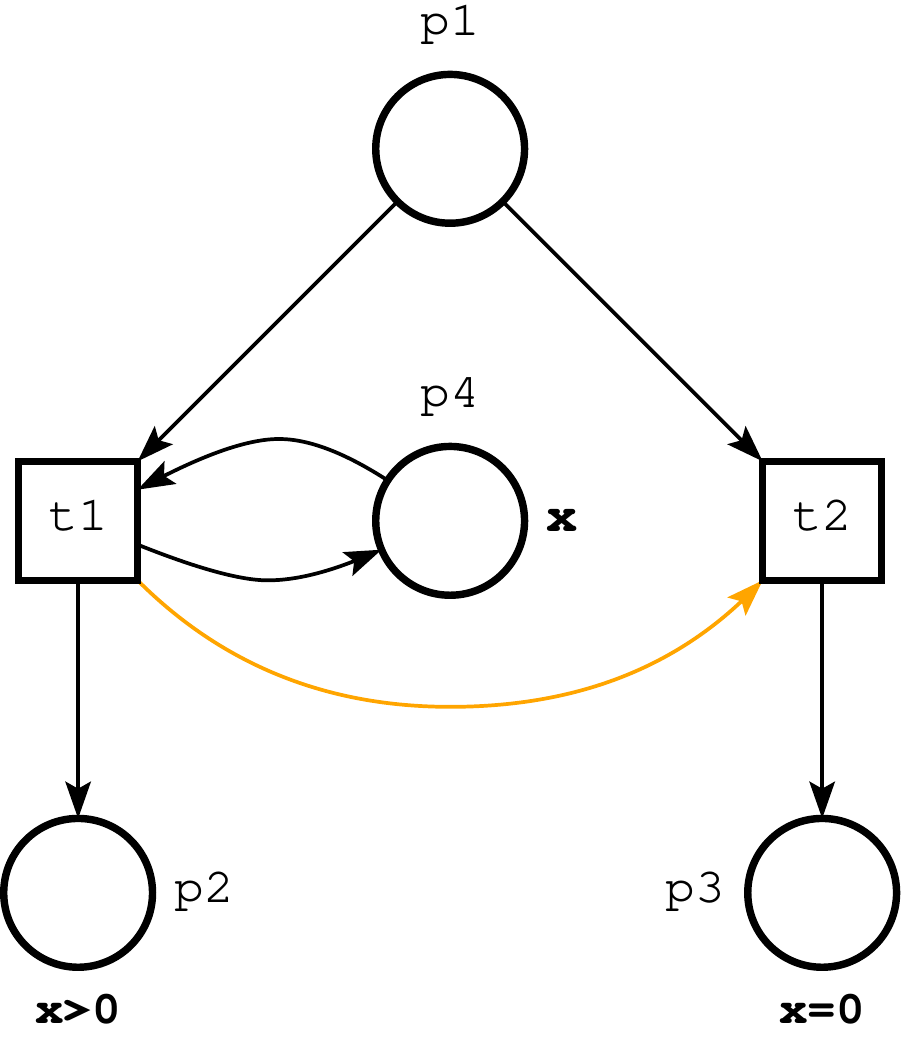}}} %\hspace{0.7cm}
\caption{priority net;}
\end{subfigure}
\begin{subfigure}[h]{0.32\textwidth}\centering{\scalebox{0.32}{\includegraphics{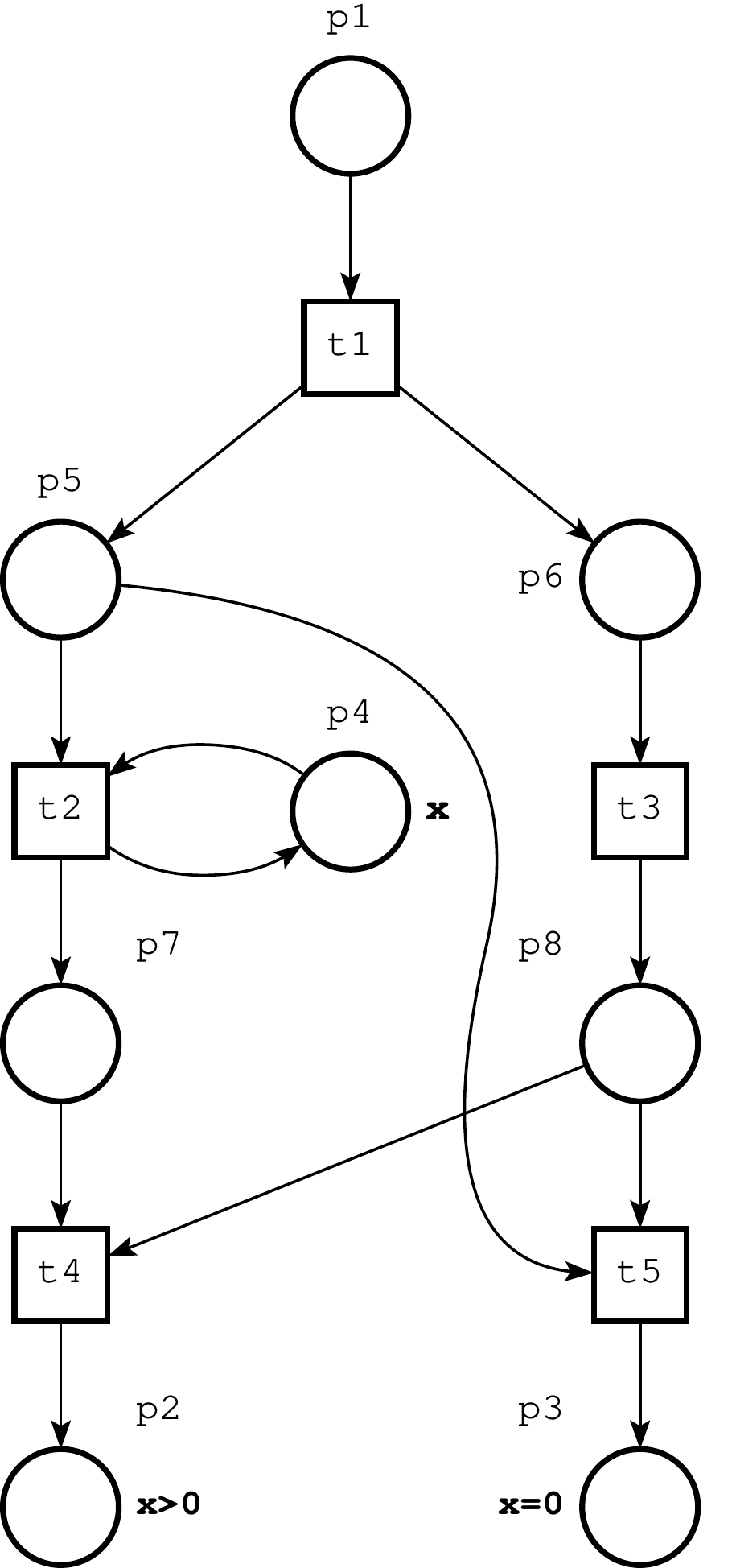}}} %\hspace{0.7cm}
\caption{Salwicki net.}
\end{subfigure}
\caption{Implementation of zero check. }
\label{fig-net-zero-check}
\end{center}
\end{figure}

Note that we use the same simulation techniques for all the considered classes of PTNs including Sleptsov nets. Here we do not employ ability of Sleptsov nets for fast computations \cite{SNRF} having the same simple control flow simulated by the token passage from place $q_1$ to place $q_{m+1}$.

\subsection{Example of Programming in Register Machine Language}

Let us consider an example of programming RM and simulating an RM program by an inhibitor Petri net. We compose a program that computes a sum of an arithmetic progression $0,1,\dots,n$ for a given $n$. 

We do not use well-known formula for the arithmetic progression sum because it employs multiplication preferring a straightforward approach of summing up members of the progression:

\begin{lstlisting}
s=0;
for(i=1;i<n;i++) 
  s+=i;
\end{lstlisting}

Note that in RM we can not use directly addition and comparison of two numbers. We rewrite the above algorithm composing C program using increment, decrement, comparison with zero, and loop ``while'' only: 

\begin{lstlisting}
#include <stdio.h>
#include <stdlib.h>

int main(int argc, char *argv[])
{
  unsigned int s=0, n=atoi(argv[1]), nn=0;

  while(n>0)
  {
    while(n>0) // s+=n
    {
      s++;
      n--;
      nn++;
    }
    nn--;       // n--
    while(nn>0) // recover n
    {
      nn--;
      n++;
    }
  }  
  printf("%u\n",s);
}
\end{lstlisting}

The program has been debugged using gcc compiler in Linux operating system, an example of the corresponding command lines follows:

\begin{lstlisting}
>gcc -o apsum apsum.c
>./apsum 4
10
\end{lstlisting}

Observing direct correspondence of C operators and RM instructions, we compose table~\ref{C-RM-table}. Unconditional ``goto'' is represented by J(i0)[s] instruction of RM where r\_i0 is a dedicated register which value equals 0 initially and does not change. 

\begin{table}[h!]
\small
\caption{Simulating C operators by RM instructions.\label{C-RM-table}}
\begin{tabular}{|m{6cm}|m{6cm}|}
\toprule
C program & RM program \\
\midrule
\begin{lstlisting}
r_i++ 
\end{lstlisting}
& 
\begin{lstlisting}
P(i)
\end{lstlisting}\\
\hline
\begin{lstlisting}
r_i-- 
\end{lstlisting}
&
\begin{lstlisting}
Q(i)
\end{lstlisting}  \\
\hline
\begin{lstlisting}
if( r_i==0 ) goto l_s
\end{lstlisting}
& 
\begin{lstlisting}
J(i)[s]
\end{lstlisting} \\
\hline
\begin{lstlisting}
goto l_s
\end{lstlisting}
& 
\begin{lstlisting}
J(i0)[s]
\end{lstlisting} \\
\hline
\begin{lstlisting}
while( r_i>0 )
{
  // some operators
}
\end{lstlisting}
& 
\begin{lstlisting}
j:   J(i)[k+1]
     // some operators
k:   J(i0)[j]
k+1: 
\end{lstlisting}\\
\bottomrule
\end{tabular}\\[10pt]
\end{table}

The program modification with enumerated registers and instructions having operators composed according to table~\ref{C-RM-table}, supplied with the corresponding RM program in comments, follows (basic part only):  

\begin{lstlisting}
  unsigned int r_1=0 /* s */, 
               r_2=atoi(argv[1]) /* n */, 
               r_3=0 /* nn */, 
               r_4=0 /* constant zero */;
  
  l_1:  if(r_2==0) goto i13;	// J(2)[13]
  l_2:  if(r_2==0) goto i7;	// J(2)[7]
  l_3:  r_1++;			// P(1)
  l_4:  r_2--;			// Q(2)
  l_5:  r_3++;			// P(3)
  l_6:  if(r_4==0) goto i2;	// J(4)[2]
  l_7:  r_3--;			// Q(3)
  l_8:  if(r_3==0) goto i12;	// J(3)[12]
  l_9:  r_3--;			// Q(3)
  l_10: r_2++;			// P(2)
  l_11: if(r_4==0) goto i8;	// J(4)[8]
  l_12: if(r_4==0) goto i1;	// J(4)[1]
  l_13:;			//
\end{lstlisting}

The corresponding inhibitor Petri net, composed according to graphical encoding of RM operators represented in fig.~\ref{fig-net-inc-dec} and fig.~\ref{fig-net-zero-check}, is shown in fig.~\ref{fig-apsum-rm-ipn}. The net has been edited and debugged in the environment of modeling system Tina \cite{Tina}.

\begin{figure}[!t]
\center
\includegraphics[width=0.99\textwidth]{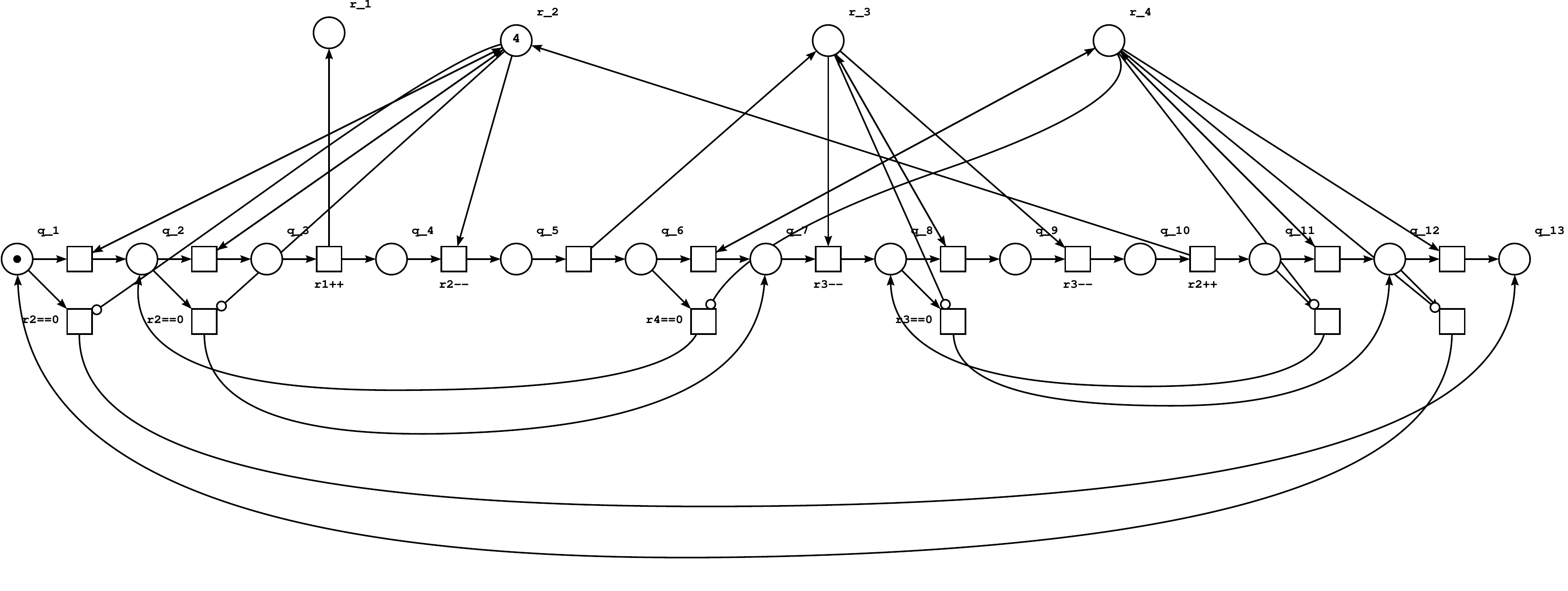}
\caption{Inhibitor Petri net simulating RM for computing sum of arithmetic progression.}
\label{fig-apsum-rm-ipn}
\end{figure}

Note that for more convenient composition of RM programs, if required, we can develop dedicated programming technology, creating library of RMs implementing basic arithmetic and logic operations and patterns to specify basic operators of branching and loop.

\section{Zero Check by Strong Sleptsov Net}
%\label{}

We compose a zero check strong Sleptsov net which function is the same as for other PTN classes shown in fig.~\ref{fig-net-zero-check}. The only difference consists in the fact we store, as a place marking, an incremented value of a register storing 0 as 1, 1 as 2, and so on. Zero marking serves for representing a variable with no value.

\begin{figure}[!t]
\center
\includegraphics[width=0.5\textwidth]{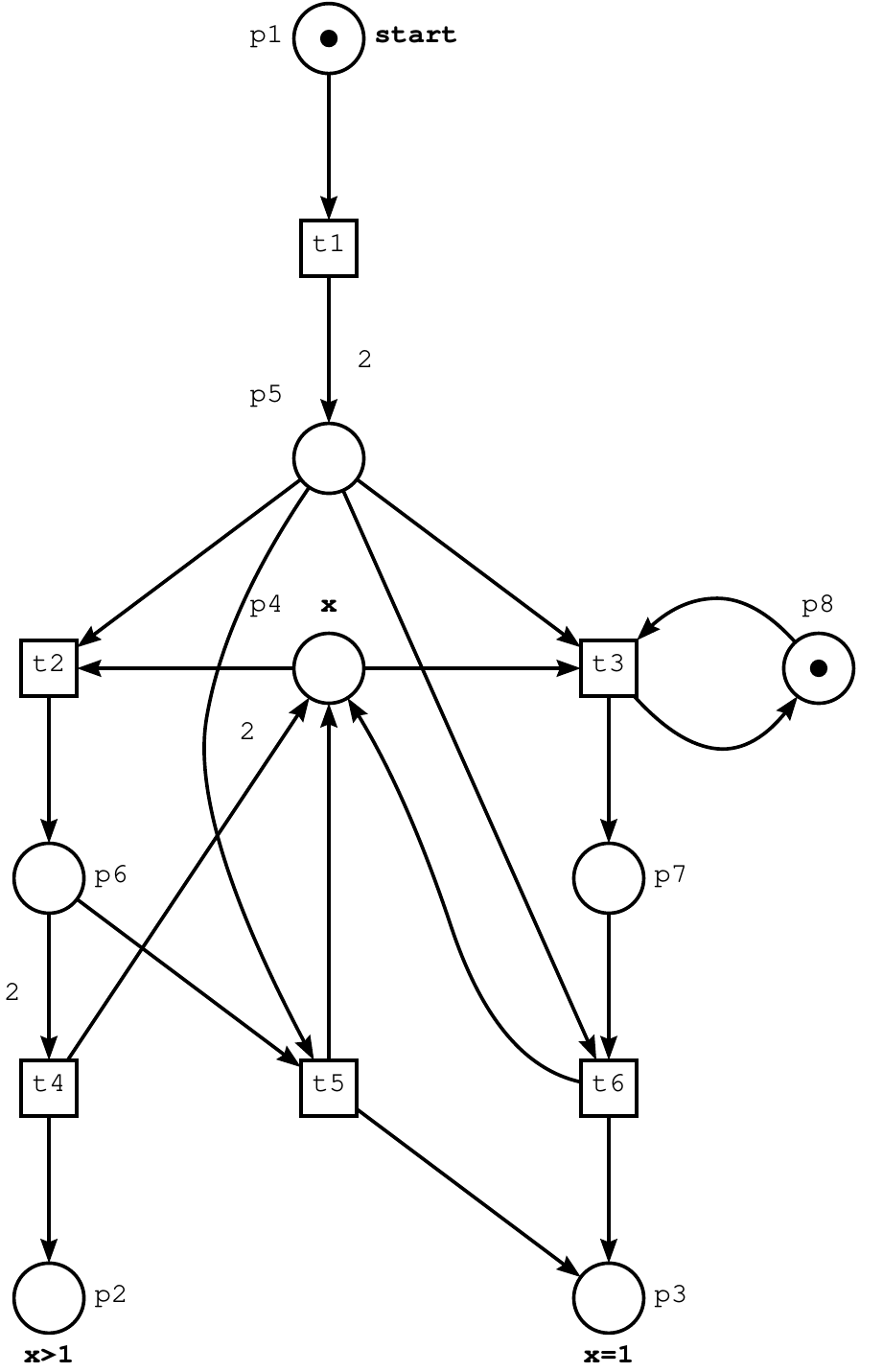}
\caption{Zero check strong Sleptsov net.}
\label{fig-sn-check-0}
\end{figure}

\begin{theorem}
Strong Sleptsov net represented in Fig.~\ref{fig-sn-check-0} implements zero check of place $x$.
\end{theorem}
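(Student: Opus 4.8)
The plan is to prove the statement operationally: read off from Fig.~\ref{fig-sn-check-0} the place set (the start place $p_1$, the finish place $p_2$, the jump place $p_3$, the register place $p_4=x$, together with the auxiliary/control places) and the transitions with their arc multiplicities, and then exhibit, for each possible content of $x$, the run the strong Sleptsov firing rule forces. Write $v=\mu(p_4)-1\ge 0$ for the encoded register value, so that ``$x$ is zero'' means $\mu(p_4)=1$ and ``$x$ is nonzero'' means $\mu(p_4)\ge 2$. I would then prove: starting from the marking with one token in $p_1$, $\mu(p_4)=v{+}1$, and all auxiliary places empty, the (essentially unique) maximal run halts with one token in $p_3$, $\mu(p_4)=v{+}1$, and auxiliary places empty when $v=0$, and with one token in $p_2$, $\mu(p_4)=v{+}1$, auxiliary places empty when $v\ge 1$ — which is exactly what the zero-check gadgets of Fig.~\ref{fig-net-zero-check} do for the other net classes.

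The key steps, in order, are: (i) a confinement remark — while the control token sits inside this subnet, every transition of the surrounding program net is disabled (its start place is empty), so the strong rule restricted to the subnet of Fig.~\ref{fig-sn-check-0} already describes the behaviour of the full program there, and it suffices to analyse the figure in isolation; (ii) a step-by-step trace of the run, where at each step I compute for every transition $t$ its firability multiplicity $c(t)=\min_{a(p,t)>0}\lfloor\mu(p)/a(p,t)\rfloor$, determine the set $F=\{t'\mid\forall t:\ c(t')\ge c(t)\}$ of maximal transitions, verify $|F|=1$, and update via $\mu^{\tau+1}(p)=\mu^{\tau}(p)-c(t){\cdot}a(p,t)+c(t){\cdot}a(t,p)$; (iii) identification of the conceptual heart, the ``probe'' step — a transition that consumes from $p_4$ with weight $1$ and from a two-token control place with weight $1$ has firability multiplicity $\min(\mu(p_4),2)$, hence fires once when $\mu(p_4)=1$ (i.e. $v=0$) and twice when $\mu(p_4)\ge 2$ (i.e. $v\ge1$); since its $p_4$-input and $p_4$-output weights coincide, firing it restores $\mu(p_4)=v{+}1$ regardless of the multiplicity; (iv) the downstream case split — when the probe fired with multiplicity $2$, a ``collector'' transition with a weight-$2$ input arc fires once (multiplicity $\lfloor 2/2\rfloor=1$) and drops a single token into $p_2$, whereas when the probe fired with multiplicity $1$ that collector is disabled and a different transition, enabled precisely by the remaining single tokens, moves the control to $p_3$; in both branches the residual auxiliary tokens are consumed so the gadget halts clean.

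The step I expect to be the main obstacle is the multiplicity bookkeeping rather than the graph topology: one must check that at \emph{every} step exactly one transition attains the maximal firability multiplicity — in particular that the probe step has no competitor, and that the collapsing (weight-$2$ input) transition never ties with the ``jump'' transition — and that the numbers of tokens transiently created in the auxiliary places (which do scale with $v$ through the probe's multiplicity) cancel exactly, so that $\mu(p_4)$ returns to $v{+}1$ and no stray token survives. This is a finite verification once the net is fixed, but it is exactly where the shift-by-one encoding and the chosen arc weights / initial auxiliary markings are essential: they convert the test ``$\mu(p_4)=0$?'' — impossible for a plain Petri net — into the test ``$\mu(p_4)=1$ versus $\mu(p_4)\ge2$?'', which the maximal-firing-multiplicity rule decides with a single bounded-weight probe.
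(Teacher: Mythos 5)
Your overall strategy coincides with the paper's: an operational, step-by-step trace of the runs forced by the strong firing rule, split on $\mu(p_4)=1$ versus $\mu(p_4)\ge 2$, with the discriminating ``probe'' being a transition whose firability multiplicity is capped at $\min(2,\mu(p_4))$ by a control place carrying two tokens, followed by a collector with a weight-$2$ input arc that detects whether the probe fired once or twice. This matches the paper's $t_2$ (probe, fed by $p_5$ holding two tokens after $t_1$ fires) and $t_4$ (collector delivering the token to $p_2$) almost exactly, including the shift-by-one encoding and the requirement that the gadget halt with $\mu(p_4)$ restored and no stray tokens.

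The one step that would fail against the actual net of Fig.~\ref{fig-sn-check-0} is your insistence on verifying $|F|=1$ at every step. The net is \emph{not} deterministic in the zero case: at the marking $\{p_4,\,2\cdot p_5,\,p_8\}$ reached when the register is zero, two transitions tie at the maximal multiplicity $1$ --- the probe $t_2$ (now limited by $\mu(p_4)=1$) and its competitor $t_3$ (always limited to $1$ by the side place $p_8$) --- so the strong rule permits either. The paper does not exclude this; it traces both alternatives ($t_2 t_5$ and $t_3 t_6$) and shows they converge to the same final marking $\{p_3,\,p_4,\,p_8\}$. You would need to replace your uniqueness check by this confluence argument; as stated, your plan would report a failed verification at exactly the step you identify as the conceptual heart. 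Two smaller divergences from the figure, harmless to the plan but relevant to the bookkeeping you promise: the probe has no self-loop on $p_4$ (it genuinely removes up to two tokens from $p_4$, which are returned downstream by $t_4$, $t_5$ or $t_6$), and the competitor $t_3$ is not a hypothetical tie to be ruled out but a deliberate second branch of the zero case.
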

\begin{proof}
We prove the theorem in a constructive way analysing and specifying permitted sequences of transitions firing. 

Firstly, transition $t_1$ fires transforming the initial marking $\{p_1,x\cdot p_4,p_8 \}$ into $\{x\cdot p_4,2\cdot p_5,p_8 \}$. Now we consider separately two basic cases when a) $x>1$ and b) $x=1$; note that when $x=0$ no transition is permitted.

For the case a), in $\{x\cdot p_4,2\cdot p_5,p_8 \}$ only one transition firing sequence $(2\cdot t_2)t_4$ is permitted. Indeed, at the first step two transitions are firable: $t_2$ with multiplicity $2$ ($min(2,x)$, $x>1$) and $t_3$ with multiplicity $1$ because of restricting place $p_8$ having marking equal to one. According to the strong Sleptsov net firing rule, $t_2$ is chosen. Then there in no alternatives. Thus, at $x>1$ we have
$$\{p_1,x\cdot p_4,p_8 \} {\rightarrow}^{t_1}$$
$$\{x\cdot p_4,2\cdot p_5,p_8 \} {\rightarrow}^{2\cdot t_2}$$
$$\{(x-2)\cdot p_4,2\cdot p_6,p_8 \} {\rightarrow}^{t_4} $$
$$\{p_2,x\cdot p_4,p_8 \}$$
which is the only transition sequence and it leads to the correct result -- marking place $p_2$ ($x>1$) preserving the rest of marking save the resetting of starting place $p_1$.

For the case b), in $\{x\cdot p_4,2\cdot p_5,x_8 \}$ two transitions $t_2$ and $t_3$ are firable with multiplicity 1 (because $x=1$). They represent an alternative -- any of them may fire because they have equal firing multiplicity. The rest of sequence for both cases has no alternatives. Thus at $x=1$ we have either
$$\{p_1,x\cdot p_4,p_8 \} {\rightarrow}^{t_1}$$
$$\{x\cdot p_4,2\cdot p_5,p_8 \} {\rightarrow}^{t_2}$$
$$\{p_5,p_6,p_8 \} {\rightarrow}^{t_5} $$
$$\{p_3,x\cdot p_4,p_8 \}$$
or
$$\{p_1,x\cdot p_4,x_8 \} {\rightarrow}^{t_1}$$
$$\{x\cdot p_4,2\cdot p_5,p_8 \} {\rightarrow}^{t_3}$$
$$\{p_5,p_7,p_8 \} {\rightarrow}^{t_6} $$
$$\{p_3,x\cdot p_4,p_8 \}.$$

Two alternative sequences lead to the same result -- marking place $p_3$ ($x=1$) preserving the rest of marking save the resetting of starting place $p_1$. The corresponding parametric RG is shown in fig.~\ref{fig-rg-sn-check-0}. 
\end{proof}

\begin{figure}[!t]
\center
\includegraphics[width=0.6\textwidth]{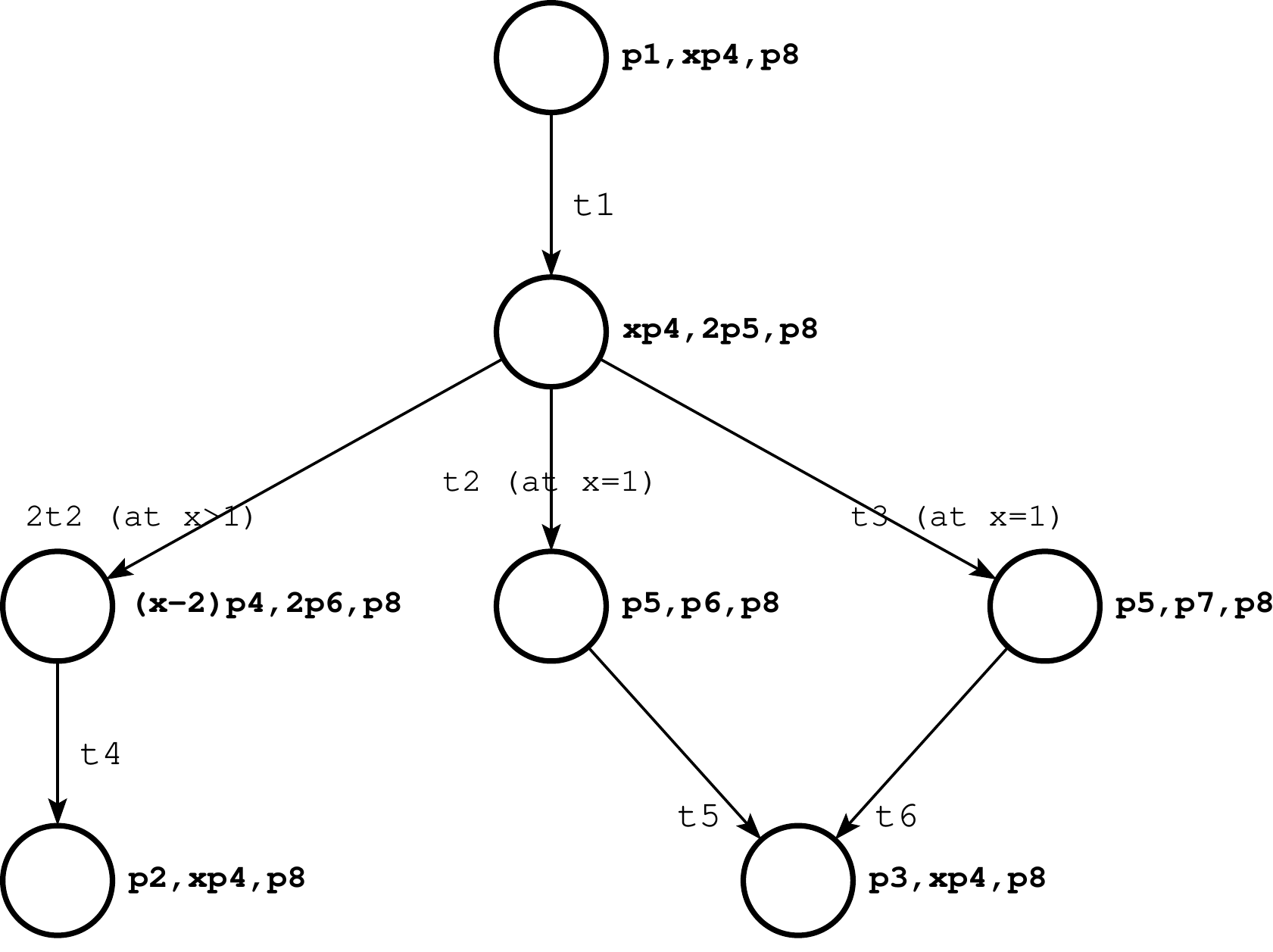}
\caption{Parametric reachability graph for zero check strong Sleptsov net (fig.~\ref{fig-sn-check-0}).}
\label{fig-rg-sn-check-0}
\end{figure}

As immediately inferred from the above theorem, constructs of Section~\ref{sim-rm-by-ssn}, and \cite{Shepherdson}, we formulate the following corollary.

\begin{corollary}
A strong Sleptsov net is Turing-complete.
\end{corollary}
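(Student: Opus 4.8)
The plan is to reduce Turing-completeness of the strong Sleptsov net to the known Turing-completeness of the Shepherdson--Sturgis register machine \cite{Shepherdson}. Given an arbitrary RM program with $m$ instructions $I_1,\dots,I_m$ over registers $r_1,\dots,r_n$, I would assemble a strong Sleptsov net $N$ exactly as prescribed in Section~\ref{sim-rm-by-ssn}: one place $r_i$ per register carrying the shifted encoding (value $v$ stored as marking $v+1$, marking $0$ meaning ``no value''); control places $q_1,\dots,q_{m+1}$; for each $I_j$ of type $P(i)$ or $Q(i)$ the gadget of Fig.~\ref{fig-net-inc} resp.\ Fig.~\ref{fig-net-dec} with $p_1{=}q_j$, $p_2{=}q_{j+1}$, $p_3{=}r_i$; and for each $I_j$ of type $J(i)[k]$ the zero-check gadget of Fig.~\ref{fig-sn-check-0} with $p_1{=}q_j$, $p_2{=}q_{j+1}$, $p_3{=}q_k$, $p_4{=}r_i$, internal places and transitions renumbered disjointly. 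The claim is that $N$, started from the marking putting one token on $q_1$ and $v_i+1$ tokens on each $r_i$, reaches a marking with a token on $q_{m+1}$ and shifted final contents on the $r_i$ exactly when the RM halts, and diverges otherwise.

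The correctness proof rests on a control invariant proved by induction over firing steps: at every reachable marking exactly one control place $q_j$ is marked and it carries exactly one token, while each $r_i$ holds at least its shift token. The base case is the initial marking. For the inductive step, if the marked control place is $q_{m+1}$ no transition is firable, so $N$ halts, matching RM termination. If $q_j$ with $j\le m$ is marked, only transitions of the $j$-th gadget can be firable, since every other transition has an input control place that the invariant forces to be empty; hence the global strong-firing rule (a transition with maximal firing multiplicity over all of $T$) coincides with the selection restricted to the $j$-th gadget. For $P(i)$ and $Q(i)$ the gadget has a single forced transition, firing with multiplicity $1$ (the control place bounds it), moving the token from $q_j$ to $q_{j+1}$ while adding resp.\ removing one token on $r_i$; $Q(i)$ is executed by the RM only when $r_i>0$, i.e.\ when the shifted marking is $\ge 2$, so the gadget never stalls. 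For $J(i)[k]$ the Theorem applies verbatim: from a token on $q_j$ and shifted marking $x\ge 1$ on $r_i$, every permitted firing sequence ends with the token on $q_{j+1}$ if $x>1$ (register nonzero) and on $q_k$ if $x=1$ (register zero), all other marking unchanged; $x=0$ never occurs by the invariant. In each case the invariant is re-established and the gadget's net effect matches one RM instruction.

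Concatenating these per-step correspondences yields a step-by-step simulation: the sequence of marked control places in $N$ reproduces the sequence of instruction counters of the RM, and the $r_i$ markings track the shifted register contents throughout. Hence $N$ computes exactly the partial function the RM computes, and since RMs compute every partial recursive function \cite{Shepherdson}, so does a strong Sleptsov net; the model is Turing-complete.

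I expect the main obstacle to be the composition argument rather than any single gadget: one must ensure that the \emph{global} maximal-multiplicity rule does not let a high-weight transition of an inactive gadget, or of the zero-check machinery of another instruction, pre-empt the intended behaviour. This is precisely why the shifted encoding and the control invariant are needed --- they guarantee that at any reachable marking the only firable transitions belong to the currently active gadget, so the delicate multiplicity comparison in the Theorem's proof (where $t_2$ must beat $t_3$ exactly when $x>1$) is never disturbed by the surrounding net. A secondary point is to check that merging the jump place $p_3$ of a $J(i)[k]$ gadget with the target control place $q_k$ is consistent with the invariant, which holds because the gadget deposits exactly one token there.
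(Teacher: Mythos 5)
Your proposal follows essentially the same route as the paper: the corollary is obtained there as an immediate consequence of the zero-check theorem, the $P(i)$/$Q(i)$ gadgets and the control-place composition of Section~\ref{sim-rm-by-ssn}, and the Turing-completeness of the Shepherdson--Sturgis register machine. Your write-up merely makes explicit the control invariant (one marked $q_j$, inactive gadgets have no firable transitions, so the global maximal-multiplicity rule reduces to the active gadget) that the paper leaves implicit, and that elaboration is sound.
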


\section{Conclusions}

A strong Sleptsov net, where a transition with the maximal firing multiplicity fires at a step, has been introduced, its Turing-completeness proven. It allows us to compose Sleptsov net programs without using additional concepts such as inhibitor arcs or transition priorities. 

A classification of place-transition nets with regard to transition firing rules, based on general definitions and their weak and strong variants, has been presented.

A question whether a general Sleptsov net is Turing-complete remains open representing a direction for future research.

\end{document}